\documentclass{article}

%

\usepackage{nips_2017}

\usepackage{times}
\usepackage{titlesec}
\usepackage{hyperref}
\usepackage{url}
\usepackage{gensymb}
\usepackage{multirow}
\usepackage{amsmath}
\usepackage[export]{adjustbox}
\usepackage{subfig}
\usepackage{caption}
\usepackage{algorithmic}
\usepackage[skins]{tcolorbox}
\usepackage{kantlipsum, tabularx, tikz, booktabs}
\usepackage{varwidth}
\usepackage{relsize}
\usepackage{times}
\usepackage{float}
\usepackage{color}
\usepackage{soul}
\usepackage{setspace}
\usepackage[normalem]{ulem}
\usepackage{color}
\usepackage{multicol}
\usepackage{amsthm}
\usepackage{hyperref}
\usepackage{cleveref}
\usepackage{lipsum}
\usepackage{wrapfig}
\usepackage{gensymb}
\usepackage[utf8]{inputenc} 
\usepackage[T1]{fontenc}    
\usepackage{hyperref}       
\usepackage{url}            
\usepackage{booktabs}       
\usepackage{amsfonts}       
\usepackage{nicefrac}       
\usepackage{microtype}      
\usepackage{wrapfig}
\usepackage{amsmath}
\usepackage{amsthm}
\usepackage{amssymb}
\usepackage{graphicx}
\usepackage[colorinlistoftodos]{todonotes}
\usepackage[ruled, linesnumbered]{algorithm2e}
\usepackage{caption}
\usepackage{float}
\usetikzlibrary{arrows}
\tikzset{
  treenode/.style = {align=center, inner sep=0pt, text centered,
    font=\sffamily},
  arn_n/.style = {treenode, circle, white, font=\sffamily\bfseries, draw=black,
    fill=black, text width=1.5em},
  arn_r/.style = {treenode, circle, red, draw=red, 
    text width=1.5em, very thick},
  arn_x/.style = {treenode, rectangle, draw=black,
    minimum width=0.5em, minimum height=0.5em}
}
\newtheorem{theorem}{Theorem}

\newtheorem*{lemma*}{Lemma}	
\newtheorem*{theorem*}{Theorem}

\theoremstyle{definition}
\newtheorem{defn}{Definition}[]

\titlespacing\section{2pt}{12pt plus 4pt minus 2pt}{2pt plus 2pt minus 2pt}
\titlespacing\subsection{2pt}{8pt plus 4pt minus 2pt}{2pt plus 2pt minus 2pt}
\titlespacing\subsubsection{2pt}{8pt plus 4pt minus 2pt}{2pt plus 2pt minus 2pt}

\SetKwComment{Comment}{$\triangleright$\ }{}
\usepackage[margin=1cm]{caption}



\newcommand{\argminF}{\mathop{\mathrm{argmin}}\limits}   
\newcommand{\squeezeup}{\vspace{-2.5mm}}

\usetikzlibrary{shapes.arrows}
\usetikzlibrary{arrows.meta}
\usetikzlibrary{patterns}

%

\newcommand{\citeravi}[1]{(\citeauthor{#1}, \citeyear{#1})}
\newcommand{\citenopar}[1]{\citeauthor{#1}, \citeyear{#1}}

\newcommand{\del}[1]{}

\newcommand{\clearrow}{\global\let\rowmac\relax}
\newcommand{\edge}[2]{$e_{(#1,#2)}$}
\newcommand{\edgem}[2]{e_{(#1,#2)}}
\clearrow

%

\author{
}

\title{PURE: Scalable Phase Unwrapping with Spatial Redundant Arcs}

\begin{document}

\maketitle
\begin{abstract}
Phase unwrapping is a key problem in many coherent imaging systems, such as synthetic aperture radar (SAR) interferometry. A general formulation for redundant integration of finite differences for phase unwrapping ~\citeravi{Costantini2010} was shown to produce a more reliable solution by exploiting redundant differential estimates. However, this technique requires a commercial linear programming solver for large-scale problems. For a linear cost function, we propose a method based on Dual Decomposition that breaks the given problem defined over a non-planar graph into tractable sub-problems over planar subgraphs. We also propose a decomposition technique that exploits the underlying graph structure for solving the sub-problems efficiently and guarantees asymptotic convergence to the globally optimal solution. The experimental results demonstrate that the proposed approach is comparable to the existing state-of-the-art methods in terms of the estimate with a better runtime and memory footprint.
\end{abstract}

\section{Introduction} 

Phase unwrapping is the process of recovering unambiguous phase values from multi-dimensional phase data that are measured modulo $2\pi$ (wrapped data) and is affected by random noise and systematic disturbances. The difference between the measured and the actual phase has an ambiguity is integer multiples of $2\pi$.  For unwrapping purposes it is usually assumed that the sampling rate is adequate over most of the data set so that aliasing is avoided. In other words, the true absolute phase difference between two neighboring data points is generally less than $\pi$. This reduces the phase unwrapping problem into that of integration of the phase difference between neighboring data points under certain paths (in noisy conditions).  This problem is representative of a class of important imaging techniques such as interferometric synthetic aperture radar ~\citeravi{Allen2008}, optical interferometry ~\citeravi{Pandit:94}, magnetic resonance imaging ~\citeravi{4685486}, and diffraction tomography ~\citeravi{4157460}. The advent of synthetic aperture radar interferometry (InSAR) \citeravi{rosen2010} and numerous other applications spurred interest in developing reliable two-dimensional (2D) phase unwrapping algorithms. The most widely used technique for 2D phase unwrapping relies on network programming approaches, which formulates the problem as a Minimum Cost Flow (MCF) \citeravi{Costantini1998} and is applicable only for planar graphs.

Some of the factors that influence the quality of phase unwrapping for SAR interferometry include (1) signal interference due to atmospheric conditions; (2) temporal decorrelation due to the changes of the scattering characteristics at different times; and (3) geometric decorrelation due to different imaging geometries arising from the long distances between the repeated orbits of the same satellite. To address these factors, several techniques (\citenopar{Hooper2007};\citenopar{Costantini2010}; \citenopar{Shanker2010}) were developed to process multiple conventional interferograms collected at different times from the same area. A general formulation for redundant integration technique for multi-temporal phase unwrapping \citeravi{Costantini2010}  exploits redundant information obtained from any pairs of points (typically close together but not necessarily nearest neighbors), making it possible to obtain a solution  robust to outliers and noise both in 2D and  multidimensional cases. The general formulation includes standard phase unwrapping and finite difference integration techniques as special cases. 

Edgelist algorithm \citeravi{Shanker2010}  utilizes the information from the temporal dimension by replacing the MCF formulation with its dual  formulation, where  the  closed  loops  are  replaced  by  reliable  edges  as  the  base  construct.  This variant formulation enables the inclusion of  external geodesic measurements as constraints. A step-wise 3D algorithm \citeravi{Hooper2007} solves the multi-temporal unwrapping in multiple stages, first temporally and subsequently uses relaxed temporal solution to constrain the solution in spatial dimensions.

Similar to MCF formulation \citeravi{Costantini1998}, Edge list algorithm and Linear Programming (LP) based redundant arcs formulation \citeravi{Costantini2010}  exhibit total unimodularity which enables us to solve the integer programming problem efficiently. However, for large-scale interferograms, despite using commercial solvers, these formulations do not scale better than  MCF formulation (section \ref{results}). Also, the publicly available LP solvers  perform poorly when compared against commercial solvers in handling large datasets \citeravi{Cs2012}. In the rest of the paper, we refer to the LP based redundant arcs formulation  as LPRA.

Decomposition methods employ message-passing between sub-problems to iteratively improve the solution of large-scale optimization problems \citeravi{Santos2009}. However, their use is unexplored for phase unwrapping techniques to the best of our knowledge. Our contributions are: (1) We show how decomposition methods \citeravi{Rush2012} that break the given problem into tractable subproblems, can be employed to solve the LPRA efficiently (2) We propose a decomposition that asymptotically converges to a globally optimal solution and shows a significant runtime improvements over  LPRA. Further, the decomposed sub-problems only requires an MCF solver for which numerous efficient implementations are publicly available \citeravi{Kiraly2012}. In this paper, we focus on 2D interferograms with spatial redundant arcs. 

The remainder of the paper is organized as follows: The background and related work is discussed in section (\ref{background}).  An introduction to decomposition method based on Lagrangian relaxation and the necessary condition for asysmptotic optimality is described in sections (\ref{ourappraoch}) and (\ref{theory}) respectively. The validation methods and results are described in section (\ref{results}). Conclusion with future directions are discussed in section (\ref{futurework}).

\section{Background and Related work} \label{background}

The general phase unwrapping formulation is reviewed in section (\ref{PhunwForm}). A brief description of the conditions necessary for solving phase unwrapping as a MCF in section (\ref{netprog}). LPRA construction as a totally uni-modular formulation is described in section (\ref{redarcs}).

\subsection{Phase Unwrapping Formulation} \label{PhunwForm}

Consider a set of points in 2D and let the phase measured at a point $i$ be ${\psi_{i}}$. Phase Unwrapping involves extraction of the unwrapped phase value, $\Phi_i$, from ${\psi_{i}}$, which are related by 
{\small
\begin{gather}
\Phi_{i} = \Psi_{i} + 2\pi n_{i}
\end{gather}
}
where $n_{i}$ represent the integer number of the cycles that must be added to $\Psi_i$ to obtain the corresponding $\Phi_i$. In addition, it is convenient to define a directed graph $\mathcal{G} := (\mathcal{V}, \mathcal{E})$, whose nodes are the set of grid points $i$ and an edge $\edgem{i}{j} \in \mathcal{E}$ that connects the grid points $i$ and $j$. We define a new variable $\delta_{ij}$ to represent the integer flow along the directed edge \edge{i}{j} such that
\begin{gather}
n_{i} - n_{j} + \delta_{ij} - \delta_{ji} =  \delta^{\prime}_{ij} \text{ ,} ~ ~ \forall \edgem{i}{j} \in \mathcal{E} \\
\text{where,} \qquad \delta^{\prime}_{ij} = \Bigg[\frac{\psi_{i} - \psi_{j}}{2\pi}\Bigg]  \nonumber \\
 [.] ~~ \text{represents the nearest integer function} \nonumber
\end{gather}

\pagebreak The phase unwrapping can then be stated as,
\begin{gather}
\label{eq:objective}
\min_{\delta} \sum\limits_{\edgem{i}{j} \in \mathcal{E}} c_{ij}|\delta_{ij}|^m 
\end{gather}
\begin{gather}
\text{subject to: } \qquad \qquad \qquad \qquad \qquad \nonumber \\
n_{i} - n_{j} + \delta_{ij} - \delta_{ji} = \delta^{\prime}_{ij} \text{ ,} \qquad \forall ~ \edgem{i}{j} \in \mathcal{E} \label{eq:constraint}\\
\qquad \qquad  n_{i}, n_{j} \in Z^{+} \qquad \quad \forall  ~ i,j \in \mathcal{V} \nonumber\\
\qquad \qquad \delta_{ij}, \delta_{ji} \in \{0,1\} \qquad \text{ } \forall ~  \edgem{i}{j} \in \mathcal{E} \nonumber
\end{gather}
where the positive exponent {\it m} defines the selected metric, and $c_{ij}\in R$ representing the reliability of the preliminary estimates. When $m=1$ ($m=2$), the above problem reduces to a linear integer (quadratic) programming problem. The value of $n_{i}$ are determined only up to an additive constant because an additive constant leaves constraint (\ref{eq:constraint}) unaltered.

\subsection{Phase Unwrapping as Network Programming} \label{netprog}

When the graph $G$ considered in equation (\ref{eq:objective}) is planar, the constraint matrix is totally unimodular. In addition, the problem can be transformed to an equivalent MCF formulation in the dual network. Let $F$ be the faces in the planer embedding of $G$. Then, the constraints of the dual network is obtained by the set $F$ as follows. For each face $f \in F$, the edges constituting $f$ are traversed according to an uniformly chosen orientation and the constraints corresponding to the traversed edges are integrated together. As an example, consider the face shown below, comprised by the vertices $i, j, k$. Let the orientation be counter-clockwise. 
\begin{center}
{\small
\begin{tabular}{lll}
\multicolumn{3}{c}{\resizebox{1.95cm}{!}{\begin{tikzpicture}
\draw (0,0) node[anchor=north]{$i$}
  -- (0,2) node[anchor=south]{$k$}
  -- (2,2) node[anchor=south]{$j$}
  -- cycle;
  
\draw [->,line width=1pt] (0.45,1.45) arc[x radius=0.25cm, y radius =.25cm, start angle=-220, end angle=90];

\end{tikzpicture} }}\\ 
\end{tabular}
}
\end{center}
The constraints of the edges $ij, jk, ki$ are
\begin{align}
	n_{i} - n_{j} + \delta_{ij} - \delta_{ji} = \delta^{\prime}_{ij}  \nonumber  \\
	n_{j} - n_{k} + \delta_{jk} - \delta_{kj} = \delta^{\prime}_{jk}	\label{left} \\
	n_{k} - n_{i} + \delta_{ki} - \delta_{ik} = \delta^{\prime}_{ki}  \nonumber 
\end{align}
Traversing the edges in the order $ij, jk, ki$ and integrating the constraints, we get the reformulated constraint for MCF as
{\small
\begin{equation}
\label{mcfconstraint}
\begin{split}
  (\delta_{ij} - \delta_{ji}) + (\delta_{jk} - \delta_{kj}) + (\delta_{ki} - \delta_{ik})\\
   =  \delta^{\prime}_{ij} + \delta^{\prime}_{jk} + \delta^{\prime}_{ki} 
\end{split}
\end{equation}
}
The integration path becomes
\begin{align}
n_{i} - n_{j} + \delta_{ij} - \delta_{ji} = \delta^{\prime}_{ij} \nonumber \\
n_{j} - n_{k} + \delta_{jk} - \delta_{kj} = \delta^{\prime}_{jk} 
\end{align}
Many different strategies \citeravi{Kiraly2012} exist to solve MCF formulation. Supplementary constraints could be used to reflect a priori considerations, or could be useful in initializing the algorithm for improving the runtime. 

\subsection{Redundant arcs construction as LP}
\label{redarcs}

MCF formulation is applicable only to planar graphs. In the case of a general graph $\mathcal{G}$, \citenopar{Costantini2010} showed that it is sufficient to choose the cycle space of $\mathcal{G}$ as the basis and to sum the constraints of each cycle (in an uniformly chosen orientation) to transform the constraint matrix to a totally unimodular matrix. A numerically efficient way to construct this constraint space is by using a spanning tree. If the spanning tree is $\mathcal{S}:=(\mathcal{V}, \mathcal{T})$ of $\mathcal{G}$, then each of the edges, $\edgem{i}{j} \in \mathcal{E}\backslash \mathcal{T}$ are called backedges. Each such backedge forms a unique cycle in $\mathcal{G}$ called fundamental cycle. 

This construction enables an optimal relaxation to the linear domain (for $m=1$) through the use of totally unimodular property. Computationally efficient algorithms exist \citeravi{opac-b1100407} to solve redundant LP method, although slower than MCF. Our contribution in this work uses this construction to develop an iterative technique that is computationally more effective.

\subsection{Our Approach} \label{ourappraoch}

This section is organized as follows. A brief review of Dual Decomposition technique is presented in section (\ref{dd}). In  section (\ref{theory}) and section (\ref{decomp}), our method with the properties of the sub-problems for runtime improvement and asymptotic optimality are discussed respectively. 

\subsection{Dual Decomposition} 
\label{dd}

We will first describe the general framework to illustrate methodology behind dual decomposition and then describe how this technique leads to an efficient way to solve phase unwrapping on a non-planar graph. Consider the following problem
\begin{gather}
\min\limits_{\bf x} f({\bf x})  \label{eq:refdd} \\
\text{subject to:} \quad {\bf x} \in Q \quad \nonumber
\end{gather}
where {\it Q} is a closed convex set, ${\bf x}$ is a vector and $f$ is any function of {\bf x}. Suppose, the objective function $f$ is linearly separable (as in equation (\ref{eq:objective})), that is $f({\bf x}) = \sum\limits_{i}f_i({\bf x}_{i})$. Then, the above problem can be transformed using auxiliary public variable $z$ as,
\begin{align}
\min\limits_{\{{\bf x}_{i}\}, z} \sum_{i} f_{i}({\bf x}_{i})  \label{eq:refddsim} \quad \quad \\
\text{subject to:} \quad  
{\bf x}_{i} \in Q_{i} \quad \forall i \nonumber \\
\quad {\bf x}_{i} = P_{i}{\bf z}  \quad \forall i \nonumber
\end{align}
where $Q_i$ is a closed convex set, such that $\cap_{i} Q_i = Q $. Vector ${\bf x}_i$ corresponding to each sub-problem (indexed by $i$) is the local copy of the public vector ${\bf z}$. $P_{i}$ is a binary matrix that maps the public vector $z$ into variables corresponding to each sub-problem ${\bf x}_i$. We can decouple the objective functions $f_i$ by relaxing the coupling constraint $x_{i} = P_{i}z$ using Lagrangian multipler ($\lambda$) to form the following dual function:
\begin{gather}
L({\bf z}, \lambda) = \min\limits_{\{{\bf x}_{i} \in Q_{i}\}, {\bf z}} \sum_{i} f_{i}({\bf x}_{i}) + \lambda^{T} \Big({\bf x}-P{\bf z}\Big)  \nonumber \\
\qquad \qquad \quad = \min\limits_{\{{\bf x}_{i} \in Q_{i}\}, {\bf z}} \sum_{i} \Big(f_{i}({\bf x}_{i}) + \lambda_{i}^{T}{\bf x}_{i}\Big) - \lambda^{T}P{\bf z}  \label{eq:lagrelax}
\end{gather}
Differentiating $L(z, \lambda)$ with respect to $z$ yields the condition $P^{T}\lambda = 0$. The dual function can now be solved independently for each $i$ given $\lambda$
\begin{gather}
L_{i}(\lambda_{i}) = \min\limits_{x_{i}} f_{i}({\bf x}_{i}) + \lambda_{i}^{T}{\bf x}_{i} \label{eq:subprob} \\ 
\text{subject to:} \quad {\bf x}_{i} \in Q_{i} \nonumber 
\end{gather}
The dual of the equation (\ref{eq:refdd}) then becomes 
\begin{gather}
\max_{\lambda} \quad L(\lambda) = \sum\limits_{i} L_{i}(\lambda_{i}) \label{eq:dual} \\
\text{subject to:} \quad P^{T}\lambda = 0 \nonumber
\end{gather}
The equation (\ref{eq:dual}) is convex and can be solved with the projected subgradient method \citeravi{Bertsekas2010}. We refer to equation (\ref{eq:dual}) as the dual decomposition master problem (DDMP). The projected subgradient algorithm is used to iteratively solve the DDMP algorithm, where at each iteration we first perform the decentralized optimization of the subproblems (in equation ~\ref{eq:subprob}) followed by a sub-gradient update of the lagrangian multipliers ($\lambda$) as follows: 
\begin{align}
\lambda_{i} \leftarrow \lambda_{i} + \alpha_{t} g_{\lambda_{i}}
\end{align}
Here, $g_{\lambda_{i}}$ is the subgradient of the objective function $L_i$ with respect to $\lambda_{i}$ and $\alpha_{t}$ denotes the step size along the direction of positive subgradient. In our case, the subgradient of $g_{\lambda}$ is simply $x_{i}^{*}$, which is the optimal point $x_{i}$ in equation (\ref{eq:subprob}). The lagrangians $\lambda$ are then projected onto the constraint space in equation~\ref{eq:dual} given by:
\begin{align}
\lambda_{i} \leftarrow \mathfrak{N}_{\lambda}( \lambda_{i} + \alpha_{t} x_{i}^{*}) \label{eq:projsubgrad}
\end{align}
where, $\mathfrak{N}_{\lambda}$ denotes the projection function onto the set $\{\lambda | P^{T} \lambda = 0\}$. This process is repeated until convergence of the DDMP objective.

\subsection{Definitions} \label{theory}

In this section, we define the transformed phase unwrapping following the LPRA construction and the Lagrangian relaxation for the phase unwrapping problem more formally. 

Without loss of generality, assume that the graph $\mathcal{G}$ is connected. Let $C$ be a sequence of arcs forming a closed path, and $\Gamma$ be a set of independent closed paths that span the whole cycle-space. Then, LPRA \citeravi{Costantini2010} provides a numerically efficient way to solve the original phase unwrapping problem, by summing for each $C\in \Gamma$ the equation~\ref{eq:constraint} that corresponds to arcs $(i,j)$ belonging to $C$ that transforms the constraints space to the following set of equations, denoted $\mathcal{A}(\Gamma)$:

{
\small
\begin{gather}
\mathcal{A}(\Gamma):\qquad \sum_{(i, j) \in C} (\delta_{ij} - \delta_{ji}) = \sum_{(i, j) \in C} \delta^{\prime}_{ij}, \qquad C\in \Gamma; \qquad \forall \delta_{ij}, \delta_{ji} \in \{0,1\} \label{eq:lincnstr} 
\end{gather}
}

\begin{defn}
\label{def:TU}
The transformed phase unwrapping on $\mathcal{G}(\mathcal{V}, \mathcal{E})$ with cycle-space $\Gamma$, denoted $\mathbb{PU}$, is: 
\begin{gather*}
\mathbb{PU}:=\min\limits_{\substack{\delta \in \mathcal{A}(\Gamma) \\ \delta \in \{0,1\}}} \quad \sum_{ij\in\mathcal{E}} c_{ij}\delta_{ij}
\end{gather*}

The linear minimization problem, denoted $\mathbb{LP}$, is obtained by linear relaxation of the integer flows:

\begin{gather*}
\mathbb{LP}:=\min\limits_{\substack{\delta \in \mathcal{A}(\Gamma) \\ {\color{blue} \delta \in [0,1]}}} \quad \sum_{ij\in\mathcal{E}} c_{ij}\delta_{ij}
\end{gather*}
\end{defn}

\begin{defn}
\label{dfn:decomp}
With a little abuse of exponent notation, let $\mathcal{H} = \{\mathcal{G}^{k}(\mathcal{V}^{k}, \mathcal{E}^{k}): k \in [1, \mathcal{K}]\}$ be a decomposition of a graph $\mathcal{G}$, where $\mathcal{K}$ is the cardinality of the set $\mathcal{H}$ that covers the graph $\mathcal{G}$ such that $\cup \mathcal{E}^{k} = \mathcal{E}$ and $\mathcal{V}^{k} = \mathcal{V}$. Let $\mathcal{G}_{e}(i,j)$ be the set of all subgraphs that contain edge $(i, j)$. Let $\mathcal{A}(\Gamma^k)$ be the constraint space defined on each subgraph $\mathcal{G}^k$, with cycle space $\Gamma^{k} \subset \Gamma $ (as in equation~\ref{eq:lincnstr}).

Assume that we introduce local flow variables and the corresponding cost function $\{\delta^k_{ij}, c_{ij}^{k}: ij \in \mathcal{E}_{k}\}$ for each subproblem $k$, with additional consistency constraints $\sum_{k} c_{ij}^{k} = c_{ij}$ and $\delta^{k}_{ij}=\delta_{ij}$. We can then define the Lagrangian relaxation for the $\mathbb{LP}$ by introducing a vector of lagrangian variable $\lambda$ for every consistency constraint on $\delta_{ij}^k$, denoted $\mathbb{LR}$, as:

\begin{gather*}
\mathbb{LR}:= \max_{\lambda} \sum_{k} f^{k}(\lambda^k)\\
f^{k}(\lambda^k) = 
\min\limits_{\substack{\delta^{k} \in \mathcal{A}(\Gamma^k) \\ {\color{blue} \delta^{k}, \delta \in [0,1]}}} \sum_{ij \in \mathcal{E}^k} c^{k}_{ij}{\delta}^{k}_{ij} + \lambda_{ij}^{k} \Big({\delta}^{k}_{ij} - {\delta}_{ij} \Big)
\textbf{\textbf{\textbf{}}}\end{gather*}

\end{defn}

We can now state the theorem~\ref{th:tum} behind LPRA algorithm's construction, which appears as Theorem 3.4c in ~\citeravi{Kavitha2009}.

\begin{theorem}
\label{th:tum}
{\bf (Tight Relaxation):} Constraint matrix $\mathcal{A}$ defined on the cycle space $\Gamma$ is totally unimodular. Thus, if either $\mathbb{PU}$ or $\mathbb{LP}$ has a finite optimal value, then so does the other, and their optimal values coincide. 
\end{theorem}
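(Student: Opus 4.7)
My plan is to decompose the theorem into two independent claims and address them in sequence. The first and substantive claim is that $\mathcal{A}$ is totally unimodular; the second is the implication that $\mathbb{PU}$ and $\mathbb{LP}$ share the same finite optimum. The latter is a routine application of a classical polyhedral theorem once TU is in hand, so I would spend most of the effort on the former.

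For total unimodularity, I would build $\mathcal{A}$ using the spanning-tree construction recalled in Section~\ref{redarcs}. Fix a spanning tree $\mathcal{S} = (\mathcal{V}, \mathcal{T})$ and let $\Gamma$ be the set of fundamental cycles, one per back-edge $e \in \mathcal{E}\setminus\mathcal{T}$. The rows of $\mathcal{A}$ are indexed by $\Gamma$; the columns, by directed arcs $(i,j)$ where the entry is $+1$ if the arc is traversed in the cycle's chosen orientation, $-1$ if traversed against it, and $0$ otherwise. I would first note that since each back-edge sits in exactly one fundamental cycle, permuting columns so that back-arcs come first yields a block of the form $[\,\mathrm{diag}(\pm 1)\ \mid\ M\,]$, where $M$ records how tree arcs participate in fundamental cycles. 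By a standard reduction (pivot on the identity block), $\mathcal{A}$ is TU if and only if $M$ is TU. The core step is then to prove TU of $M$, which I would do by the Ghouila-Houri / signed-cycle criterion: for any subset $R$ of rows (cycles), I must exhibit a signing of those rows, $\epsilon_C \in \{\pm 1\}$, such that the weighted column sums $\sum_{C \in R}\epsilon_C\,M_{C,e}$ all lie in $\{-1,0,+1\}$. This is exactly where the tree structure is doing the work: each tree edge $e$ separates $\mathcal{V}$ into two components, and a fundamental cycle crosses $e$ (with a definite sign equal to the side on which its back-edge's endpoints lie) if and only if the back-edge spans the cut induced by $e$. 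Choosing $\epsilon_C$ according to which side of $e_0$ (for any fixed reference tree arc) the back-edge of $C$ lives on, and iterating the argument, produces a valid signing. I expect this signing step to be the main obstacle; the alternative, which I would fall back on, is to cite Theorem~3.4c of \citeravi{Kavitha2009} directly, as the paper already does.

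Once TU is established, the second claim is a direct application of the Hoffman-Kruskal theorem. The right-hand side of $\mathcal{A}(\Gamma)$ is $\sum_{(i,j)\in C} \delta'_{ij}$, which is integer because each $\delta'_{ij}$ is an integer by construction. Combined with the box constraints $0 \le \delta_{ij} \le 1$ (themselves TU-preserving, as appending a signed identity block preserves total unimodularity), the feasible polytope of $\mathbb{LP}$ has only integer vertices. Since the objective is linear, any optimum is attained at a vertex, hence at an integer point feasible for $\mathbb{PU}$. Conversely every feasible point of $\mathbb{PU}$ is feasible for $\mathbb{LP}$, so the two optima coincide whenever either is finite; and finiteness of one implies finiteness of the other because the feasible sets are identical on integer points and both objectives are bounded on the $[0,1]$-box.

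In summary, the proof reduces to verifying total unimodularity of $\mathcal{A}$ via the fundamental-cycle basis and then invoking Hoffman-Kruskal. The delicate technical work is the sign-choice in the Ghouila-Houri check; everything else is bookkeeping.
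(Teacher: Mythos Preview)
The paper does not prove this theorem at all: it states the result and attributes it to Theorem~3.4c of \cite{Kavitha2009}. Your fallback option \emph{is} the paper's entire treatment.

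Your proposal therefore goes well beyond what the paper offers. The two-step structure---establish total unimodularity of $\mathcal{A}$, then invoke Hoffman--Kruskal---is exactly right, and the reduction to the tree-arc submatrix $M$ via the block form $[\,\mathrm{diag}(\pm 1)\mid M\,]$ is clean and correct. The one place your sketch is soft is the Ghouila--Houri signing: choosing $\epsilon_C$ according to which side of a single fixed tree edge $e_0$ the back-edge of $C$ lies, and then ``iterating the argument'', does not obviously produce a \emph{single} signing that works simultaneously for every tree-arc column, which is what the criterion demands. A more direct route is to observe that $M$ is, by construction, a \emph{network matrix}: rows are indexed by non-tree arcs, columns by tree arcs, and the $(C,e)$ entry records the signed incidence of $e$ in the unique tree path joining the endpoints of $C$. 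Network matrices form a classical totally unimodular family, so TU of $M$ (and hence of $\mathcal{A}$) follows without a bespoke signing argument. With that substitution your proof is complete and self-contained, whereas the paper's is a bare citation.
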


Theorem~\ref{th:tum} enables us to solve the original phase unwrapping on a non-planar graph as a linear programming problem. In other words, this theorem states that the linear relaxation $\mathbb{LP}$ is {\it tight}. We now develop an iterative algorithm for phase unwrapping.

\subsection{Phase Unwrapping via Dual Decomposition}
\label{decomp} In our approach, we exploit the {\it strong} duality properties of any linear program to develop an iterative algorithm for $\mathbb{PU}$. Assume that the constraints of any linear program can be divided into "easy" and "hard" constraints. More specifically in the context of phase unwrapping, the constraints that break the planarity property of the underlying graph is classified as "hard" constraint. We can then use $\mathbb{LR}$ to transfer some of the constraints into the objective function. Thus, the easy constraints define $A(\Gamma^{k})$. This leads us to the following theorem: 

\begin{theorem} {\bf (Cycle Decomposition):}
\label{th:main}
For any graph decomposition that covers $\mathcal{G}$, $\mathbb{LR} \leq \mathbb{PU}$. Furthur, the equality strictly holds when $\text{ }\cap_{k} \mathcal{A}(\Gamma^k) = \mathcal{A}(\Gamma)$.
\end{theorem}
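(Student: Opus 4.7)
The plan is to derive both statements from standard Lagrangian-duality reasoning, using the cost-consistency identity $\sum_k c^k_{ij}=c_{ij}$ together with the fact that $\Gamma^k\subseteq \Gamma$ forces $\mathcal{A}(\Gamma)\subseteq \mathcal{A}(\Gamma^k)$ (after restricting to the edges of $\mathcal{E}^k$). The first claim is simply weak duality; the second is strong duality, pulled back to $\mathbb{PU}$ via Theorem~\ref{th:tum}.

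For the inequality $\mathbb{LR}\leq \mathbb{PU}$, I would fix an arbitrary vector of multipliers $\lambda$ and any optimizer $\delta^{\star}$ of $\mathbb{PU}$. Because each $\Gamma^k$ is a subset of $\Gamma$, the restriction of $\delta^{\star}$ to $\mathcal{E}^k$ automatically satisfies the cycle-sum equations defining $\mathcal{A}(\Gamma^k)$ and is therefore admissible in the $k$-th subproblem. Substituting $\delta^k_{ij}=\delta_{ij}=\delta^{\star}_{ij}$ makes every Lagrangian penalty $\lambda^k_{ij}(\delta^k_{ij}-\delta_{ij})$ vanish, so summing over $k$ and invoking $\sum_k c^k_{ij}=c_{ij}$ gives
\begin{equation*}
\sum_{k} f^{k}(\lambda^{k})\;\leq\;\sum_{k}\sum_{ij\in \mathcal{E}^k} c^{k}_{ij}\delta^{\star}_{ij}\;=\;\sum_{ij\in\mathcal{E}} c_{ij}\delta^{\star}_{ij}\;=\;\mathbb{PU}.
\end{equation*}
Taking the supremum over $\lambda$ on the left yields $\mathbb{LR}\leq \mathbb{PU}$, proving the weak inequality.

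For equality under the hypothesis $\bigcap_k \mathcal{A}(\Gamma^k)=\mathcal{A}(\Gamma)$, I would identify $\mathbb{LR}$ as the LP dual of the reformulated primal in which the local copies $\delta^k$ are coupled to the public variable $\delta$ through $\delta^k_{ij}=\delta_{ij}$. Eliminating the copies via this equality and again using cost consistency collapses the reformulated primal to
\begin{equation*}
\min\Bigl\{\sum_{ij}c_{ij}\delta_{ij}\;:\;\delta\in [0,1],\;\delta\in \bigcap_{k}\mathcal{A}(\Gamma^{k})\Bigr\},
\end{equation*}
which, under the hypothesis, is literally $\mathbb{LP}$. Since linear programs have no duality gap, $\mathbb{LR}=\mathbb{LP}$, and Theorem~\ref{th:tum} supplies $\mathbb{LP}=\mathbb{PU}$, giving $\mathbb{LR}=\mathbb{PU}$.

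The main obstacle is confirming that the bookkeeping around the public variable $\delta$ lines up with textbook Lagrangian duality. In the statement of $\mathbb{LR}$ the variable $\delta$ appears inside the per-subproblem minimization, so one has to check that an implicit dual-feasibility condition $\sum_k \lambda^{k}_{ij}=0$ (exactly analogous to $P^{T}\lambda=0$ in Section~\ref{dd}) is enforced, otherwise the minimum over $\delta\in[0,1]$ would drive the objective to $-\infty$ and make the max vacuous. Once this is in place, together with non-trivial feasibility of every subproblem (inherited from feasibility of $\mathbb{PU}$ since a valid integer flow for $\mathcal{G}$ restricts to a valid one on every subgraph), both parts reduce to routine applications of weak and strong LP duality.
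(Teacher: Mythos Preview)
Your proposal is correct and follows essentially the same route as the paper: weak duality for the inequality $\mathbb{LR}\le\mathbb{PU}$, and LP strong duality together with Theorem~\ref{th:tum} for the equality under the hypothesis $\cap_k\mathcal{A}(\Gamma^k)=\mathcal{A}(\Gamma)$. The paper is slightly more explicit in that it names the intermediate primal $\overline{\mathbb{LP}}$ (your ``reformulated primal'' over $\cap_k\mathcal{A}(\Gamma^k)$), writes out its dual, and verifies $\mathbb{LR}\ge\overline{\mathbb{LP}}$ by evaluating $\mathbb{LR}$ at the dual optimum $\lambda^\ast$, whereas you invoke LP strong duality at a higher level of abstraction---but the logic and the role of the dual-feasibility condition $\sum_k\lambda^k_{ij}=0$ are identical.
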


We refer the reader to the Appendix~\ref{sup:proof} for a proof. Theorem~\ref{th:main} allows us to break the $\mathbb{LP}$ thereby enabling us to use the underlying planarity structure and still retain asympthotic optimality. The asymptotic nature of the algorithm stems from the fact that the lagrangian variables $\lambda$ in the $\mathbb{LR}$ problem is iteratively updated through sub-gradient ascent \citeravi{Rush2012}. 

In other words, to guarantee asymptotic optimality, we have to ensure that for each vertex pair belonging to an edge, there exists a path along a chosen spanning tree in atleast one of the subgraphs. We show an example decomposition of a non-planar subgraph $K_5$ in figure~\ref{tab:nonplanardecomp} in which the chosen spanning tree is shown in red. This ensures that the decomposition works on the $\mathbb{LP}$ formulation of the original problem \citeravi{Costantini2010}, thus inheriting asymptotic optimality. In addition, the planarity of each subgraph enables us to solve them efficiently as an MCF. 
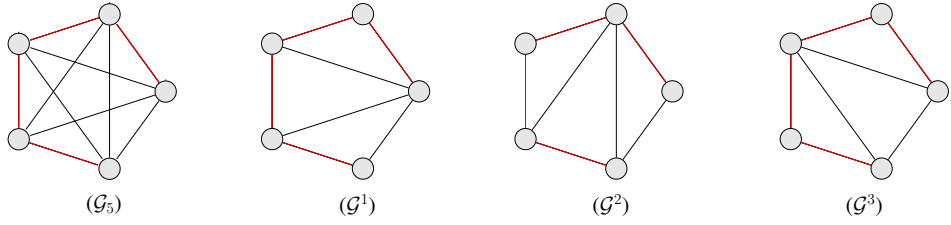
\begin{figure}
\begin{center}
{\small
\newcolumntype{C}{ >{\centering\arraybackslash} m{8cm} }
\scalebox{0.4}{
\begin{tabular}{CCCC}
\begin{tikzpicture}[node distance = 3cm and 4cm,
    				el/.style = {inner sep=2pt, align=left},
					every label/.append style = {font=\tiny},
					every  edge/.append style = {draw, -stealth', shorten > = 1pt,
                    font=\footnotesize, inner sep=2pt, auto, align=left, sloped}, scale=0.5]
  \foreach \x in {1,...,5}{%
    \pgfmathparse{(\x-1)*360/5}
    \node[draw,circle,inner sep=0.25cm, fill=gray!20!] (N-\x) at (\pgfmathresult:5.4cm) [thick] {};
  }
  \foreach \x in {1,...,5}{%
    \foreach \y in {\x,...,5}{%
        \path (N-\x) edge[thick,-] (N-\y);
  }
  }

  \path (N-1) edge[thick, -] (N-2);
  \path (N-2) edge[thick, -] (N-3);
  \path (N-3) edge[thick, -] (N-4);
  \path (N-4) edge[thick, -] (N-5);

  \path (N-1) edge[red, thick, -] (N-2);
  \path (N-2) edge[red, thick, -] (N-3);
  \path (N-3) edge[red, thick, -] (N-4);
  \path (N-4) edge[red, thick, -] (N-5);
\end{tikzpicture} \qquad \qquad \qquad & \begin{tikzpicture}[ultra thick, scale=0.5]
  \foreach \x in {1,...,5}{%
    \pgfmathparse{(\x-1)*360/5}
    \node[draw,circle,inner sep=0.25cm, fill=gray!20!] (N-\x) at (\pgfmathresult:5.4cm) [thick] {};
  }

  \path (N-1) edge[thick, -] (N-2);
  \path (N-1) edge[thick, -] (N-3);
  \path (N-1) edge[thick, -] (N-4);
  \path (N-1) edge[thick, -] (N-5);
  \path (N-2) edge[thick, -] (N-3);
  \path (N-3) edge[thick, -] (N-4);
  \path (N-4) edge[thick, -] (N-5); 

  \path (N-1) edge[thick, -] (N-2);
  \path (N-2) edge[thick, -] (N-3);
  \path (N-3) edge[thick, -] (N-4);
  \path (N-4) edge[thick, -] (N-5);

  \path (N-1) edge[red, thick, -] (N-2);
  \path (N-2) edge[red, thick, -] (N-3);
  \path (N-3) edge[red, thick, -] (N-4);
  \path (N-4) edge[red, thick, -] (N-5); 
\end{tikzpicture} \qquad \qquad \qquad & \begin{tikzpicture}[ultra thick, scale=0.5]
  \foreach \x in {1,...,5}{%
    \pgfmathparse{(\x-1)*360/5}
    \node[draw,circle,inner sep=0.25cm, fill=gray!20!] (N-\x) at (\pgfmathresult:5.4cm) [thick] {};
  }

  \path (N-2) edge[thick, -] (N-3);
  \path (N-2) edge[thick, -] (N-4);
  \path (N-2) edge[thick, -] (N-5);
  \path (N-2) edge[thick, -] (N-1);
  \path (N-3) edge[thick, -] (N-4);
  \path (N-4) edge[thick, -] (N-5);
  \path (N-5) edge[thick, -] (N-1);  

  \path (N-1) edge[thick, -] (N-2);
  \path (N-2) edge[thick, -] (N-3);
  \path (N-3) edge[thick, -] (N-4);
  \path (N-4) edge[thick, -] (N-5);

  \path (N-1) edge[red, thick, -] (N-2);
  \path (N-2) edge[red, thick, -] (N-3);
  \path (N-3) edge[red, thick, -] (N-4);
  \path (N-4) edge[red, thick, -] (N-5);

\end{tikzpicture} \qquad \qquad \qquad & \begin{tikzpicture}[ultra thick, scale=0.5]
  \foreach \x in {1,...,5}{%
    \pgfmathparse{(\x-1)*360/5}
    \node[draw,circle,inner sep=0.25cm, fill=gray!20!] (N-\x) at (\pgfmathresult:5.4cm) [thick] {};
  }

  \path (N-3) edge[thick, -] (N-4);
  \path (N-3) edge[thick, -] (N-5);
  \path (N-3) edge[thick, -] (N-1);
  \path (N-3) edge[thick, -] (N-2);
  \path (N-4) edge[thick, -] (N-5);
  \path (N-5) edge[thick, -] (N-1);
  \path (N-1) edge[thick, -] (N-2);  

  \path (N-1) edge[thick, -] (N-2);
  \path (N-2) edge[thick, -] (N-3);
  \path (N-3) edge[thick, -] (N-4);
  \path (N-4) edge[thick, -] (N-5);

  \path (N-1) edge[red, thick, -] (N-2);
  \path (N-2) edge[red, thick, -] (N-3);
  \path (N-3) edge[red, thick, -] (N-4);
  \path (N-4) edge[red, thick, -] (N-5);
\end{tikzpicture} \\ \\
{\huge ($\mathcal{G}_5$)} & {\huge ($\mathcal{G}^1$)} & {\huge ($\mathcal{G}^2$)} & {\huge ($\mathcal{G}^3$)} \\ \\
\label{subgraph}
\end{tabular}}
\caption{\label{tab:nonplanardecomp} \small $\mathcal{G}_5$ is an example of phase unwrapping on a non-planar graph. The nodes and the edges of the graph correspond to the measured wrapped phase and the constraints respectively. The sub-graphs $\mathcal{G}^1$, $\mathcal{G}^2$ and $\mathcal{G}^3$ show an example decomposition of the non-planar graph. Red colored edges in $\mathcal{G}_5$ and each of subgraph $\mathcal{G}^{1}$, $\mathcal{G}^{2}$ and $\mathcal{G}^{3}$ corresponds to the chosen spanning tree. }
}
\end{center}
\end{figure}

\begin{center}
\noindent
\begin{varwidth}{\dimexpr\linewidth-40\fboxsep-10\fboxrule\relax}
\begin{algorithm}[H]
\caption[l]{\tabular[t]{@{}l@{}} Phase Unwrapping with Redundant Arcs via \\ Dual Decomposition \endtabular }\label{ddphw}
\DontPrintSemicolon
\textbf{Initialize}: dual variables $\lambda$ that satisfies $\text{        }\sum_{k  \in G_{e}(i,j)} \lambda^{k}_{ij} = 0 \text{ } \forall \text{ } ij \in \mathcal{E}$  (e.g., $\lambda = 0$)\\
\While{Stopping criteria not met}{
 \Comment*[l]{\small Optimize each subproblem separately using MCF}
 \For{$k=1$ to $\mathcal{K}$}{
        $ \delta^{k\star} =\argminF_{\delta^{k} \in \mathcal{A}(\Gamma^k)} \sum_{k} (c^{k} + \lambda^{k})^{T}\delta^{k}$
 }
 \          \\
 \Comment*[l]{\small Update the dual variables}
 \For{$k=1$ to $\mathcal{K}$}{
		\For{$(i,j) \in \mathcal{G}_e(i,j)$}{
        	$\lambda_{ij}^{k} += \alpha_{t} \Bigg(\delta_{ij}^{k\star} - \mathlarger{\frac{1}{|\mathcal{G}_e(i,j)|}}\mathlarger{\sum\limits_{q\in\mathcal{G}_e(i,j)}}\delta_{ij}^{q*} \Bigg)$
        }
 }
}
\end{algorithm}
\end{varwidth}%
\end{center}
\vspace{0.02in}

The constraint on the cost function (in definition ~\ref{dfn:decomp}) can be satisfied by distributing the cost ($c_{ij}$) uniformly among subgraphs. The algorithm (\ref{ddphw}) describes the pseudo-code for Dual Decomposition based phase unwrapping. It can be easily shown that the algorithm (\ref{ddphw}) corresponds to a projected subgradient ascent algorithm for solving $\mathbb{LR}$ problem.

The algorithm (\ref{ddphw}) consists of two main steps:
\begin{itemize}
\item Line 4: In the first step, the lagrangian variables are fixed to optimize each subproblem ($f^k(\lambda^k$) in definition~\ref{dfn:decomp}) as a MCF independently.

\item Line 8-9: In the second step, the dual variables are updated using subgradient ascent,  such that the consistency constraint (as in definition~\ref{dfn:decomp}) for local flow variables belonging to each subproblem is satisfied.
\end{itemize}

These steps are repeated until convergence. We use the relative change in the dual objective value as the convergence criteria for the algorithm (\ref{ddphw}), since it is expected to plateau near the dual optimum.

\section{Results and Discussions} \label{results}

{
\captionsetup[subfloat]{captionskip=-0.2cm}
\captionsetup[subfigure]{labelformat=empty}
\begin{figure} [!ht]%
	\hspace{-1cm}
    \vspace{-0.6cm}
    \subfloat[Phase Image A]{\includegraphics[scale=0.3]{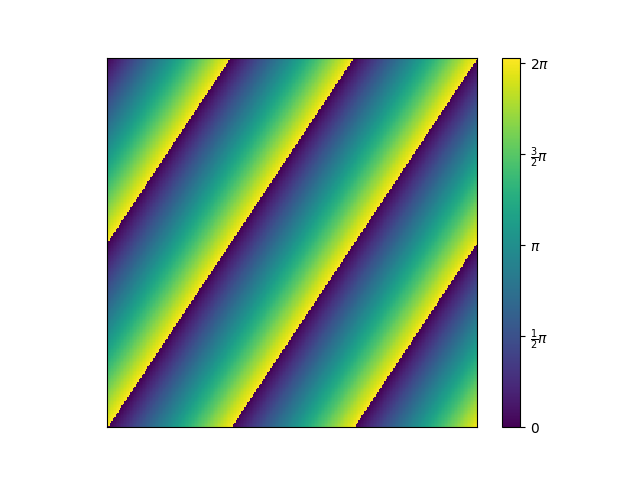}} %
    \subfloat{{\includegraphics[scale=0.3]{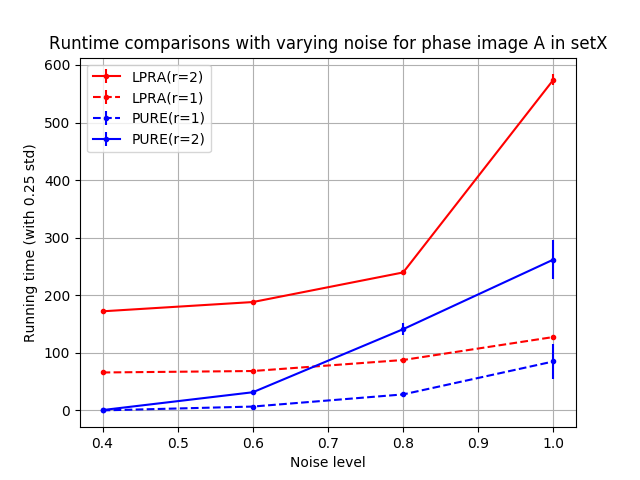} }}%
    \subfloat{{\includegraphics[scale=0.3]{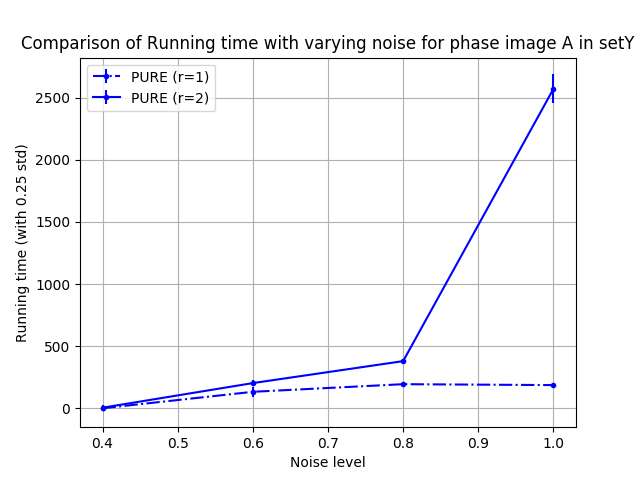} }} \\
    
	\hspace{-1cm}
    \vspace{-0.6cm}
    \subfloat[Phase Image B]{{\includegraphics[scale=0.3]{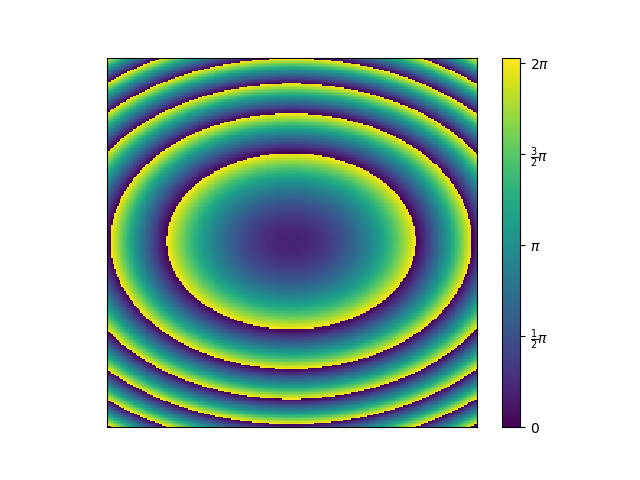} }}%
    \subfloat{{\includegraphics[scale=0.3]{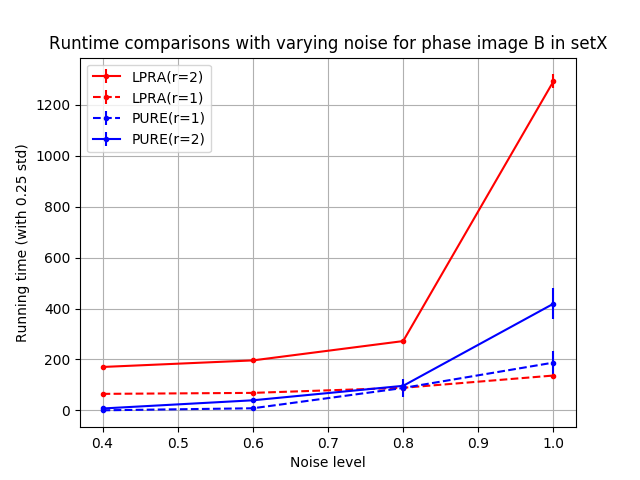} }}%
    \subfloat{{\includegraphics[scale=0.3]{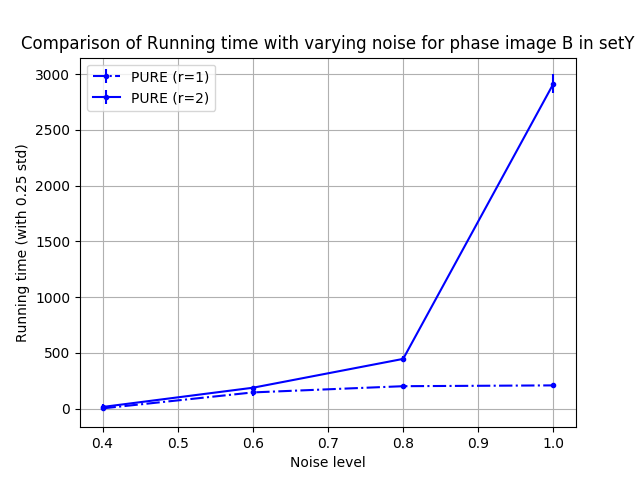} }} \\
    
	\hspace{-1cm}
    \vspace{-0.6cm}
    \subfloat[Phase Image C]{{\includegraphics[scale=0.3]{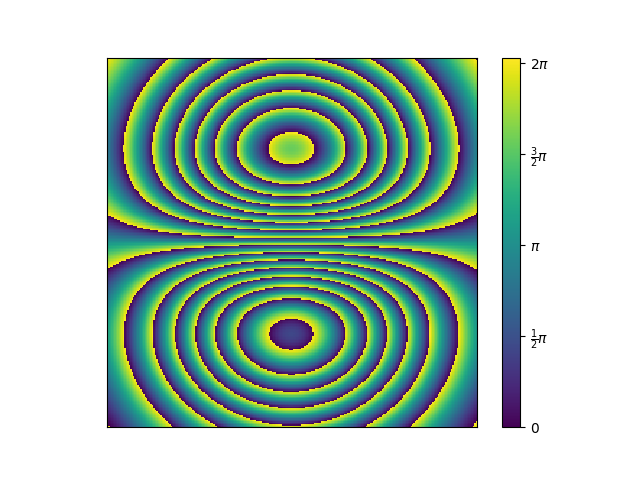} }}%
    \subfloat{{\includegraphics[scale=0.3]{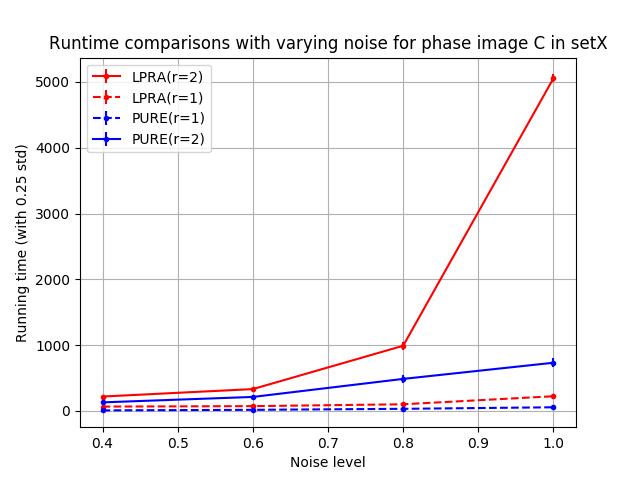} }}%
    \subfloat{{\includegraphics[scale=0.3]{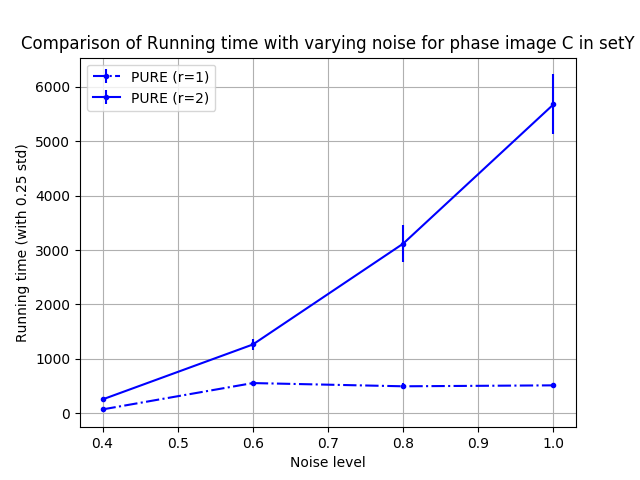} }} \\
    
	\hspace{-1cm}
    \vspace{-0.6cm}
    \subfloat[Phase Image D]{{\includegraphics[scale=0.3]{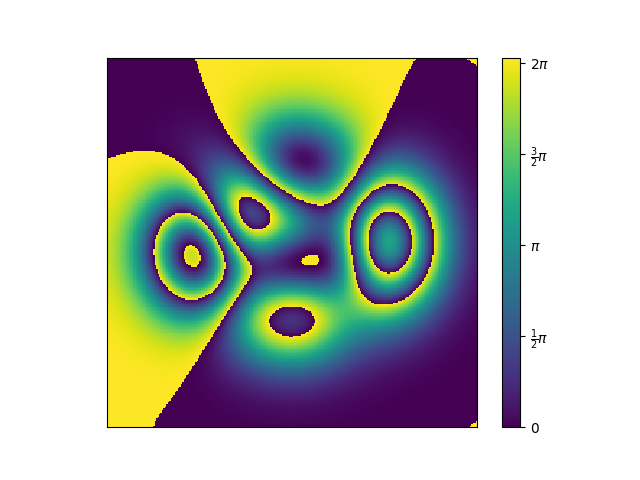} }}%
    \subfloat{{\includegraphics[scale=0.3]{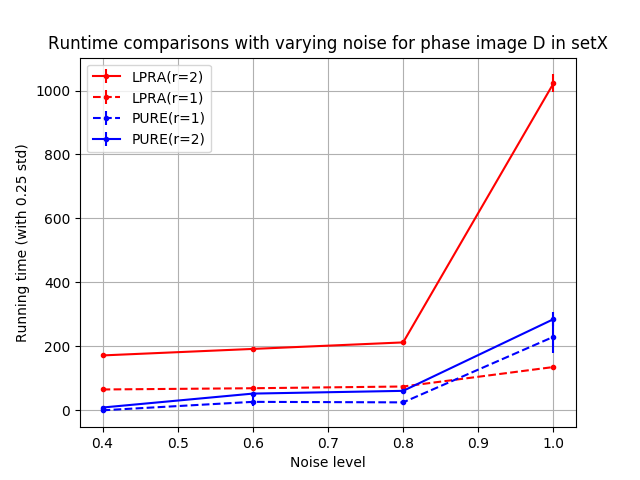} }}%
    \subfloat{{\includegraphics[scale=0.3]{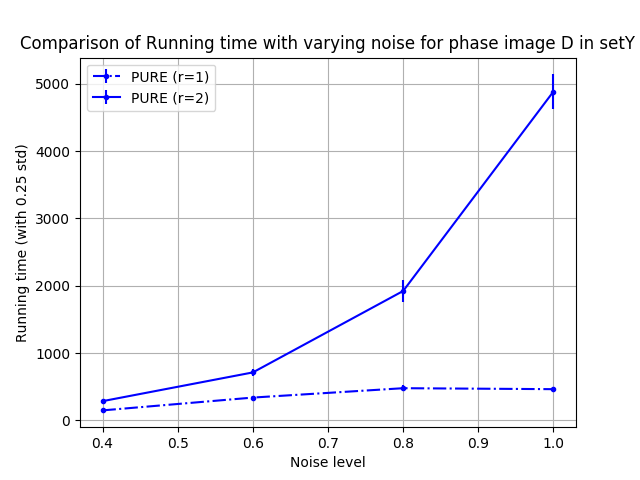} }} 
    \vspace{6mm}
    \caption{The first column above shows the wrapped phase images (A,B,C \& D) without noise used in all our experiments. The plots in the second and third column show the running time (in seconds) comparison between PURE and LPRA for variable noise levels in the input. The value r corresponds to the redundant arcs \{1, 2\}. In all of the plots above, the dotted and thick line correspond to running time with redundant arcs 1 \& 2 respectively.}%
   
    \label{fig:unwrapped_sim}%
    \vspace{-4mm}
\end{figure} 
}
\squeezeup
{\bf Methods Compared:} We first evaluate the performance of our method (henceforth referred to as PURE) using two different MCF algorithms, namely cost-scaling ~\citeravi{Goldberg1995} and the network simplex algorithm ~\citeravi{Orlin1997}. The cost-scaling algorithm is a primal-dual method that applies a successive approximation scheme by scaling the cost. The network simplex algorithm is a specialized version of the simplex algorithm for linear programming that exploits the network structure of the MCF problem and performs the basic operations directly on the graph representation. RelaxIV ~\citeravi{Bertsekas1994}, a dual ascent based algorithm is another notable solving technique for MCF. However RelaxIV is not directly applicable to PURE, since it assumes an integral cost.

Our primary empirical results is the comparison of PURE against LPRA with variable redundant arcs. In order for the evaluation to be implementation independent, we report the running time for each algorithm as the time taken by the respective solvers. For completeness, we also present the quality improvements we obtain by using the redundant arcs compared to LPRA (with redundant arcs distance = 2).

{\bf Experimental Setup:} We used a commercially solver Gurobi (Version 6.5.1) to solve the LPRA formulation. For solving MCF, we used the Cost-scaling code of Goldberg and Cherkassky, an efficient authoritative implementation of the algorithm. In addition, we used the Network scaling algorithm available in the MCFClass project ~\citeravi{Frangioni2006}, a common C++ interface for several MCF solvers implementation.

The cost functions (in definition~\ref{dfn:decomp}) is used to direct the placement of phase cycle jumps using quality measures derived from the data itself. The cost function is a crucial parameter that directly influences the quality of the solution. For our experiments, we used the smooth cost function as defined in SNAPHU \citeravi{Chen2001}. In all our experiments on PURE, we scaled the cost in the objective function between [0, 1] and used a constant learning rate ($\alpha_t$ in~\ref{ddphw}) until change in objective value is below a threshold of $0.02$ and a decaying learning rate ($\alpha\leftarrow\frac{\alpha}{2}$) from there on with a convergence threshold of $0.001$. 

\captionsetup[subfloat]{captionskip=0.001cm}
\captionsetup[subfigure]{labelformat=empty}
\begin{figure} [!ht]
    \subfloat[(a)]{\label{fig:solver_comp} \includegraphics[scale=0.45]{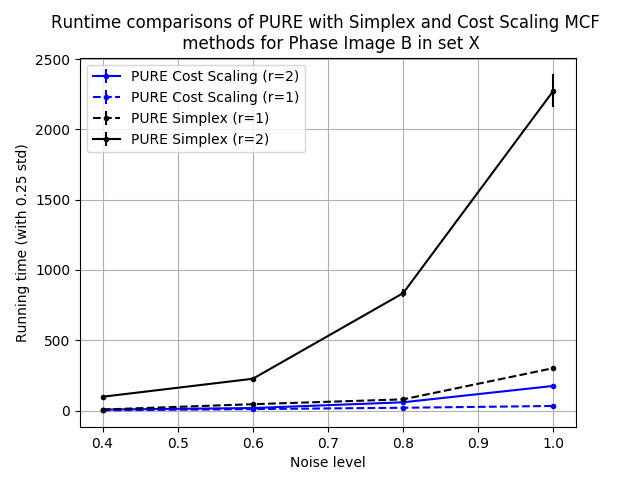}}%
    \subfloat[(b)]{\label{fig:incons}  \includegraphics[scale=0.45]{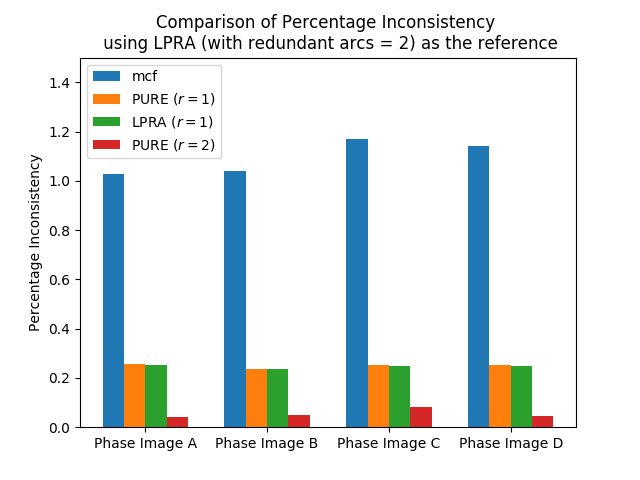}}%
    \caption{Figure (a) shows the Running time (in seconds) comparison of PURE with cost-scaling and simplex MCF algorithm. Figure (b) shows comparison of percentage inconsistency with unit noise variance using LPRA (with redundant arcs 2) as reference on setX. Value r in the legend corresponds to the redundant arcs length.}%
\end{figure} 

{\bf Datasets:} The experiment is conducted on a set of four SAR interferograms. For each of the interferogram, we generated $40$ different instance, $10$ instances each for every noise level from $0.4$ to $1$ with an increment of $0.2$. This allows us to study the running time performance of the algorithm with varying noise levels. We conduct the same set of experiments on two different resolutions $256\times256$ \& $768\times768$ (henceforth referred to as setX, setY respectively) to study the effect of constraint set size on the running time. The number of constraints which depend on redundant arcs ranges from ~$100k$ to over $500k$ for setX and from ~$1$ million to over $5$ million for setY, which can be quite challenging to solve even for a commercial solver as we show later.

\begingroup
\setlength{\intextsep}{0pt}
\setlength{\columnsep}{15pt}
{\bf Main Results:} In the first set of experiments, we conduct an empirical evaluation of the running time for PURE using cost-scaling and simplex algorithms. Figure ~\ref{fig:solver_comp} depicts the running time comparison as we vary the noise level for both redundant arcs \{1,2\} on setX. This plot is representative of the typical performance obtained in the setY instances. We observed that the performance of cost-scaling algorithm was more time efficient when compared to simplex algorithm for all our instances on setX. We are not aware of a theoretical interpretation of the performance difference at the time of writing the paper. For all our future experiments, we use PURE with the cost-scaling algorithm for solving the MCF subproblems.

In the second set of experiments, we validate the scalability of PURE by comparing the running time with LPRA. For this experiment, we predefined the planar decomposition for each redundant arcs level \{1, 2\} that satisfies the condition for asymptotic optimality as in Theorem~\ref{th:main}. First, we note that although all the problem instances in setX were solved by both the algorithms, PURE was several magnitudes faster in running time compared to LPRA. Secondly, LPRA failed to solve any of the instances in setY due to the large constraint set. To the best of our knowledge, PURE is the first algorithm to have solved constraint set of this scale for phase unwrapping with redundant arcs. 

\begin{wrapfigure}{r}{0.4\textwidth}
  \centering
    \includegraphics[width=0.35\textwidth]{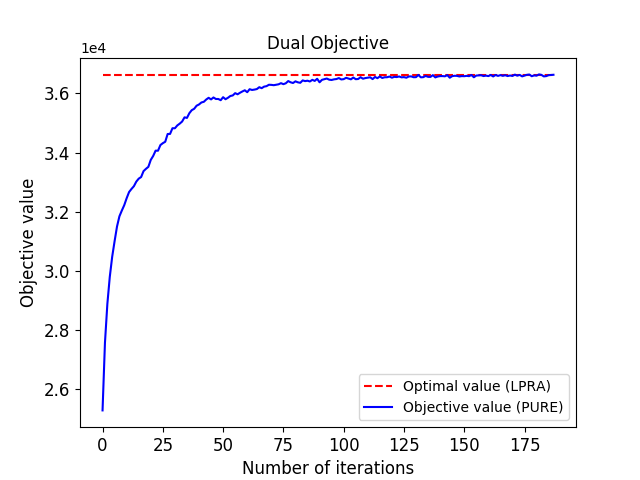}
  \caption{{\small Evolution of dual objective value for an instance of wrapped phase B in setX.}}
  \label{fig:learning_rate}
\end{wrapfigure}

In order to further quantitatively evaluate the performances of different phase unwrapping methods, we also present the comparison of inconsistency measure in Figure~\ref{fig:incons} with LPRA with redundant arcs distance = 2 as reference. The inconsistency measure is defined as the percentage number of reconstructed phase cycle that matches with ground truth phase cycle. As expected, LPRA and PURE show significant improvement over the MCF based approach. In addition, PURE is comparable in performance to that of LPRA with significantly improved running time performance. 

\endgroup

\section{Conclusion and Future Work} \label{futurework}

We proposed a scalable algorithm for phase unwrapping with redundant arcs and provide empirical evaluation of our performance on simulate SAR interferograms. In future, we aim to extend this work to the temporal dimension of the interferograms. The rate of convergence and its analysis \citeravi{Rush2012} depend on the update schedule used by the algorithm. We would also like to analyze the rate of convergence under different variants of subgradient ascent techniques.

\subsection*{Acknowledgments}
We would like to thank Piyush Agram for introducing us to Phase Unwrapping and Redundant Arcs technique and for the numerous discussion that shaped the problem definition. In addition, the authors would also like to thank BalaSanjeevi and Arun Viswanathan for reviewing the manuscript. This  work  was  carried  out  at  the  Jet  Propulsion  Laboratory,  California  Institute  of  Technology,  under  a  contract with  the  National  Aeronautics  and  Space  Administration. All rights reserved.

\bibliographystyle{plainnat}
\bibliography{references}

\section*{Appendix}
\appendix
\section{Proofs for Theoretical Results}
\label{sup:proof}

\begin{theorem*} {\bf (Cycle Decomposition):}
For any graph decomposition that covers $\mathcal{G}$, $\mathbb{LR} \leq \mathbb{PU}$. Furthur, the equality strictly holds when $\text{ }\cap_{k} \mathcal{A}(\Gamma^k) = \mathcal{A}(\Gamma)$.
\end{theorem*}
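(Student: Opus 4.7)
The plan is to chain together three comparisons: (i) weak Lagrangian duality, (ii) the relationship between the ``lifted'' LP with local copies and the original $\mathbb{LP}$, and (iii) the total unimodularity result of Theorem~\ref{th:tum}. These deliver the inequality $\mathbb{LR} \leq \mathbb{PU}$, and LP strong duality will upgrade it to equality under the hypothesis $\cap_k \mathcal{A}(\Gamma^k) = \mathcal{A}(\Gamma)$.

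First I would make precise the ``lifted LP'' whose Lagrangian dual is exactly $\mathbb{LR}$. It has variables $\{\delta^k_{ij}\} \cup \{\delta_{ij}\}$, per-subgraph cycle constraints $\delta^k \in \mathcal{A}(\Gamma^k)$, box constraints $\delta^k, \delta \in [0,1]$, the cost-splitting $\sum_k c^k_{ij} = c_{ij}$ built into the coefficients, and the consistency constraints $\delta^k_{ij} = \delta_{ij}$. Under consistency, cost-splitting makes the objective collapse to $\sum_{ij} c_{ij}\delta_{ij}$ and the common copy $\delta$ is forced into $\cap_k \mathcal{A}(\Gamma^k)$. Since every cycle in $\Gamma^k$ is also a cycle in $\Gamma$, we always have $\mathcal{A}(\Gamma) \subseteq \cap_k \mathcal{A}(\Gamma^k)$, so the lifted LP is a relaxation of $\mathbb{LP}$ and its optimum is at most that of $\mathbb{LP}$. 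Weak Lagrangian duality then yields $\mathbb{LR} \leq$ (lifted LP) $\leq \mathbb{LP}$, and Theorem~\ref{th:tum} gives $\mathbb{LP} = \mathbb{PU}$, closing the first half of the theorem.

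For the equality, the hypothesis $\cap_k \mathcal{A}(\Gamma^k) = \mathcal{A}(\Gamma)$ forces the feasible region of the lifted LP, projected onto the $\delta$ coordinates, to coincide exactly with that of $\mathbb{LP}$ (the forward inclusion was just argued; the reverse is immediate because any $\delta \in \mathcal{A}(\Gamma)$ lifts to a feasible point by setting $\delta^k_{ij} := \delta_{ij}$). Hence lifted LP $= \mathbb{LP}$. LP strong duality, applied to the lifted LP which is a finite LP with finite optimum (the latter guaranteed by Theorem~\ref{th:tum}), then ensures $\mathbb{LR} = $ (lifted LP). Chaining the equalities with Theorem~\ref{th:tum} yields $\mathbb{LR} = \mathbb{PU}$.

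The main obstacle I anticipate is the bookkeeping around the dualization: when the consistency multipliers $\lambda^k_{ij}$ are introduced, minimizing the Lagrangian over the free common copy $\delta_{ij} \in [0,1]$ forces the hidden constraint $\sum_{k \in \mathcal{G}_e(i,j)} \lambda^k_{ij} = 0$ (otherwise the dual is trivially $-\infty$); this is precisely the initialization imposed in Algorithm~\ref{ddphw}. Once this projection is accounted for, the Lagrangian decomposes across $k$ into the subproblems $f^k(\lambda^k)$ and the strong-duality argument goes through unchanged. The cost-splitting $\sum_k c^k_{ij} = c_{ij}$ is pure bookkeeping, playing no essential role in either direction of the bound.
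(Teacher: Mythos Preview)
Your proposal is correct and follows essentially the same route as the paper: introduce the lifted LP (what the paper calls $\overline{\mathbb{LP}}$), use weak Lagrangian duality and the containment $\mathcal{A}(\Gamma)\subseteq\cap_k\mathcal{A}(\Gamma^k)$ to get $\mathbb{LR}\le\overline{\mathbb{LP}}\le\mathbb{LP}=\mathbb{PU}$, and then invoke LP strong duality on the lifted problem (together with the hypothesis $\cap_k\mathcal{A}(\Gamma^k)=\mathcal{A}(\Gamma)$) to close the gap. The only cosmetic difference is that the paper writes out the full dual of $\overline{\mathbb{LP}}$ and evaluates $\mathbb{LR}$ at the dual optimizer $\lambda^{*}$ explicitly, whereas you appeal directly to the standard fact that partial Lagrangian duals of linear programs are tight; both arguments also surface the projection constraint $\sum_{k\in\mathcal{G}_e(i,j)}\lambda^k_{ij}=0$ in the same way.
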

\begin{proof}
Let us define $\mathbb{\overline{LP}}$ to facilitate the proof. From definition~\ref{dfn:decomp}, $\Gamma^{k} \subset \Gamma$ which implies $\mathcal{A}(\Gamma) \subseteq \cap_{k}\mathcal{A}(\Gamma^{k})$. Let $B(\Gamma)$ denote the additional set of constraints in $\mathcal{A}(\Gamma)$ such that, $\mathcal{A}(\Gamma) = \cap_{k}\mathcal{A}(\Gamma^{k}) \cap B(\Gamma)$. Then $\mathbb{LP}$ can be transformed to the following form: 

\begin{gather*}
\mathbb{LP}:=\min\limits_{\substack{\forall \text{k, }  \delta^{k} \in \mathcal{A}(\Gamma^{k});\text{ }\delta^{k} = \delta \\ {\color{blue} \delta \in \mathcal{B}(\Gamma)};\text{ }\delta, \delta^{k} \in [0,1]}} \quad \sum_{k}  c^{kT} \delta^{k} \\
\text{where, }\sum_{k  \in G_{e}(i,j)} c_{ij}^{k} = c_{ij} \qquad  \forall ij \in \mathcal{E}
\end{gather*}

We can then define $\mathbb{\overline{LP}}$ as follows:
\begin{gather*}
\mathbb{\overline{LP}}:=\min\limits_{\substack{\forall \text{k, } \delta^{k} \in \mathcal{A}(\Gamma^{k});\text{ }\delta^{k} = \delta \\ \text{ }\delta, \delta^{k}  \in [0,1]}} \quad \sum_{k} c^{kT} \delta^{k} 
\end{gather*}

We note that $\mathbb{\overline{LP}}\leq \mathbb{LP}$ follows from the fact that the constraint space of $\mathbb{LP}$ is contained in the constraint space of that of $\mathbb{\overline{LP}}$. Equality constraint strictly holds when $\mathcal{B}(\Gamma)=R^{n}$, which implies $\mathcal{A}(\Gamma) = \cap_{k}\mathcal{A}(\Gamma^{k})$. For ease of notation, we use $A^{k}\delta^{k}-b^{k}\leq 0$ to denote $\{\delta^{k}:  \delta^{k} \in \mathcal{A}(\Gamma^{k}), \delta^{k} \in [0,1]\}$. Let us know consider the dual of $\mathbb{\overline{LP}}$, denoted dual-$\mathbb{\overline{LP}}$:

\begin{gather*}
\text{dual-}\mathbb{\overline{LP}} := \max_{\mu, \lambda}\text{ }\min_{\delta^k, \delta} \sum_{k} c^{kT} \delta^{k} + \mu^{kT}(A^k \delta^{k} - b^{k}) + \lambda^{kT}(\delta^{k} - \delta)
\end{gather*}

Upon further simplifying, we have
\begin{gather*}
\text{dual-}\mathbb{\overline{LP}} := \max_{\mu \leq 0, \lambda} \sum_{k} -\mu^{kT} b^{k} \\
\text{subject to:} \qquad c^{k} + A^{kT}\mu^{k} + \lambda^{k} = 0 \quad \forall\text{ }k \\
\qquad \qquad \qquad \qquad \qquad \text{  } \sum_{k  \in G_{e}(i,j)} \lambda^{k}_{ij} = 0 \quad \forall \text{ } ij\in\mathcal{E}
\end{gather*}

Using linear programs strong duality property, $\mathbb{\overline{LP}}=\text{dual-}\mathbb{\overline{LP}}$. Let $(\mu^{*}, \lambda^{*})$ be the optimal solution of $\text{dual-}\mathbb{\overline{LP}}$. Then evaluating $\mathbb{LR}$ at $\lambda^{*}$, denoted $\mathbb{LR}_{\lambda^{*}}$, we have

\begin{align*}
\mathbb{LR}_{\lambda^{*}} &= \sum_{k} \min_{A^{k} \delta^{k} \leq d^{k}} [c^{kT}\delta^{k} + \lambda^{*kT} \delta^{k}] \\
&= \sum_{k} \min_{A^{k} \delta^{k} \leq d^{k}} -\mu^{*kT} A^{k} \delta^{k} \geq \sum_{k} -\mu^{*kT} b^{k}
\end{align*}

Thus, $\mathbb{LR}_{\lambda^{*}} \geq \text{dual-}\mathbb{LP} = \mathbb{LP}$. We can also easily show that $\mathbb{LR} \leq \mathbb{LP}$. Thus $\mathbb{LR} = \mathbb{LP}$, when $\mathcal{A}(\Gamma) = \cap_{k}\mathcal{A}(\Gamma^{k})$ and it follows from theorem~\ref{th:tum}, $\mathbb{LR} = \mathbb{PU}$.

A variant of this proof was presented in ~\cite{Santos2009}, however to the best of our knowledege this is the first extension to TUM.

\end{proof}

\end{document}